\documentclass{article}
\usepackage{amsfonts,amssymb,amsthm,amsmath}
\usepackage{graphicx}
\usepackage{footnote}
\usepackage{float}
\makesavenoteenv{tabular}
\usepackage{comment}
\usepackage[bookmarks]{hyperref}
\usepackage{subcaption}
\usepackage{authblk}

\theoremstyle{plain}
\newtheorem{thm}{Theorem}

\newtheorem{prop}[thm]{Proposition}

\theoremstyle{definition}
\newtheorem{deph}{Definition}

\DeclareMathOperator{\E}{\mathbb{E}}

\DeclareMathOperator{\m}{\mathcal{M}}
\DeclareMathOperator{\lin}{\mathcal{L}}
\DeclareMathOperator{\e}{\mathcal{E}}
\DeclareMathOperator{\f}{\mathcal{F}}
\DeclareMathOperator{\s}{\sigma}
\DeclareMathOperator{\modu}{mod}

\DeclareMathOperator{\D}{\partial}
\DeclareMathOperator*{\argmin}{arg\,min}

\makeatletter
\renewcommand{\maketag@@@}[1]{\hbox{\m@th\normalsize\normalfont#1}}%
\makeatother

\title{\sc Iterative Scaling Algorithm for Channels}
\author[1]{Paolo Perrone\thanks{Correspondence: perrone@mis.mpg.de}}
\author[1,2,3]{Nihat Ay}
\affil[1]{\small Max Planck Institute for Mathematics in the Sciences\\ Inselstrasse 22\\ 04103 Leipzig, Germany}
\affil[2]{\small Faculty of Mathematics and Computer Science, University of Leipzig\\ PF 100920\\ 04109 Leipzig, Germany}
\affil[3]{\small Santa Fe Institute, 1399 Hyde Park Road, Santa Fe, NM 87501, USA}
\date{}
\pagestyle{empty}

\begin{document}
\maketitle
\thispagestyle{empty}
\addcontentsline{toc}{subsection}{Abstract}

\begin{abstract}

Here we define a procedure for evaluating KL-projections (I- and rI-projections) of channels. These can be useful in the decomposition of mutual information 
between input and outputs, e.g. to quantify synergies and interactions of different orders, as well as information integration and 
other related measures of complexity.

The algorithm is a generalization of the standard iterative scaling algorithm, which we here extend from probability distributions to channels
(also known as transition kernels). 

\bigskip
\noindent {\bf Keywords:} Markov Kernels, Hierarchy, I-Projections, Divergences, Interactions, Iterative Scaling, Information Geometry.
\end{abstract}

\section{Introduction}

Here we present an algorithm to compute projections of channels onto exponential families of fixed interactions.

The decomposition is geometrical, and it is based on the idea that, rather than joint distributions, the 
quantities we work with are channels, or conditionals
(or Markov kernels, stochastic kernels, transition kernels, stochastic maps).
Our algorithm can be considered a channel version of the iterative scaling of (joint) probability distributions,
presented in \cite{csiszar}.  


Exponential and mixture families (of joints and of channels) have a duality property, 
shown in Section \ref{fam}. By fixing some marginals,
one determines a mixture family. By fixing (Boltzmann-type) interactions, one determines an exponential family.
These two families intersect in a single point, which means that (Theorem \ref{dualmk}) 
\emph{there exists a unique element which has the desired marginals and the desired interactions}.

As a consequence, Theorem \ref{dualmk} translates projections onto exponential families (which are generally hard to compute)
to projections onto fixed-marginals mixture families (which can be approximated by an iterative procedure). 
Section \ref{algo} explains how this is done.

Projections onto exponential families are becoming more and more important in the definition of measures 
of statistical interaction, complexity, synergy, and related quantities.
In particular, the algorithm can be used to compute decompositions of mutual information, as for example the ones defined 
in \cite{olbrich} and \cite{us}, and it was indeed used to compute all the numerical examples
in \cite{us}.
Another application of the algorithm is explicit computations of complexity measure as treated in 
\cite{preprint2}, \cite{amaripreprint}, and \cite{amaribook}.
Examples of both applications can be found in Section \ref{applic}.

For all the technical details about the iterative scaling algorithm
in its traditional version, we refer the interested reader to Chapters 3 and 5 of \cite{csiszar}.

All proofs can be found in the Appendix.

\subsection{Technical Definitions}

We take the same definitions and notations as in \cite{us}, except that we let the output be multiple.
More precisely, we consider a set of $N$ input nodes $V$, taking values in the sets $X_1,\dots,X_N$, and a set of $M$ output nodes $W$,
taking values in the sets $Y_1,\dots,Y_M$. We write the input globally as $X:=X_1\times\dots\times X_N$, and the output globally as $Y:=Y_1\times\dots\times Y_M$.
We denote by $F(Y)$ the set of real functions on $Y$, and by $P(X)$ the set of probability measures on $X$. 

\begin{deph}
 Let $I\subseteq V$ and $J\subseteq W$. 
 We call $F_{IJ}$ the space of functions which only depend on $X_I$ and $Y_J$:
 \begin{align}
  F_{IJ} := \big\{ f &\in F(X,Y)\;\big| \notag \\ 
  &f(x_I,x_{I^c},y_J,y_{J^c})=f(x_I,x'_{I^c},y_J,y'_{J^c})\;\forall x_{I^c}, x'_{I^c},y_{J^c},y'_{J^c} \big\}\;.
 \end{align}
\end{deph}

We can model the channel from $X$ to $Y$ as a Markov kernel $k$, that assigns to each $x\in X$ a probability measure on 
$Y$ (for a detailed treatment, see \cite{kakihara}).
Here we will consider only finite systems, so we can think of a channel simply
as a transition matrix (or stochastic matrix), whose rows sum to one.
\begin{equation}\label{stoc}
 k(x;y)\ge 0\quad \forall x,y; \qquad \sum_{y} k(x;y) =1 \quad \forall x\;.
\end{equation}
The space of channels from $X$ to $Y$ will be denoted by $K(X;Y)$.
We will denote by $X$ and $Y$ also the corresponding random variables, whenever this does not lead to confusion.

Conditional probabilities define channels: if $p(X,Y)\in P(X,Y)$ and the marginal $p(X)$ is strictly positive, then $p(Y|X)\in K(X;Y)$ is a well-defined
channel. Viceversa, if $k\in K(X;Y)$, given $p\in P(X)$ we can form a well-defined joint probability:
\begin{equation}
 pk(x,y):= p(x)\,k(x;y)\quad \forall x,y\;.
\end{equation}

To extend the notion of divergence from probability distributions to channels, we need an ``input distribution'':
\begin{deph}
 Let $p\in P(X)$, let $k,m\in K(X;Y)$. Then:
\begin{equation}\label{mkdiv}
 D_p(k||m):= \sum_{x,y} p(x)\,k(x;y)\,\log\dfrac{k(x;y)}{m(x;y)}\;.
\end{equation}
\end{deph}

Let $p,q$ be joint probability distributions on $X\times Y$, and let $D$ be the KL-divergence. Then the following ``chain rule'' holds:
 \begin{equation}\label{pkdiv}
  D(p(X,Y)||q(X,Y)) = D(p(X)||q(X)) + D_{p(X)}(p(Y|X)||q(Y|X))\;.
 \end{equation}

\section{Families of Channels}\label{fam}

Suppose we have a family $\e$ of channels, and a channel $k$ that may not be in $\e$. Then we
can define the ``distance'' between $k$ and $\e$ in terms of $D_p$.
\begin{deph}
 Let $p$ be an input distribution. The divergence between a channel $k$ and a family of channels $\e$ is given by:
 \begin{equation}
  D_p(k||\e):=\inf_{m\in\e} D_p(k||m)\;.
 \end{equation}
 If the minimum is uniquely realized, we call the channel
 \begin{equation}
  \pi_{\e}k:=\argmin_{m\in\e} D_p(k||m)\;
 \end{equation}
 the \emph{rI-projection} of $k$ on $\e$ (and simply ``an'' rI-projection if it is not unique). 
\end{deph}

The families considered here are of two types, dual to each other: linear and exponential. For both cases, we take
the closures, so that the minima defined above always exist. 
\begin{deph}
 A \emph{mixture family} of $K(X;Y)$ is a subset of $K(X;Y)$ defined by one or several affine equations, i.e., the locus of the $k$
 which satisfy a (finite) system of equations in the form:
 \begin{equation}\label{mixture}
  \sum_{x,y} k(x;y) f_i(x,y) = c_i\;,
 \end{equation}
 for some functions $f_i\in F(X,Y)$, and some constants $c_i$.
\end{deph}

\paragraph{Example.} Consider a channel $m\in K(X;Y_1,Y_2)$. 
We can form the marginal:
\begin{equation}
 m(x;y_1):= \sum_{y_2}m(x;y_1,y_2)\;.
\end{equation}
The channels $k\in K(X;Y_1,Y_2)$ such that $k(x;y_1)=m(x;y_1)$ form a mixture family,
defined by the system of equations (for all $x'\in X$, $y'_1\in Y_1$):
\begin{equation}
 \sum_{x,y_1,y_2}k(x;y_1,y_2)\,\delta(x,x')\delta(y_1,y'_1) = m(x';y'_1)\;,
\end{equation}
where the function $\delta(z,z')$ is equal to 1 for $z=z'$, and zero for any other case.

More in general, let $\lin$ be a (finite-dimensional) linear subspace of $F(X,Y)$, and let $k\in K(X;Y)$. Then:
 \begin{equation}\label{m}
  \m(k,\lin) := \bigg\{ m\in K(X;Y)\,\bigg| \sum_{x,y}m(x;y)l(x,y)=\sum_{x,y}k(x;y)l(x,y) \; \forall l\in \lin \bigg\}\,
 \end{equation}
 is a mixture family, which we call \emph{generated by $k$ and $\lin$}.
 
\begin{deph}
 A (closed) \emph{exponential family} of $K(X;Y)$ is (the closure of) a subset of $K(X;Y)$ of channels in the form:
 \begin{equation}
  \dfrac{e^{f(x,y)}}{Z(x)}\, k(x;y) \;,
 \end{equation}
 where $f$ satisfies affine constraints, $k$ is fixed, and:
 \begin{equation}
  Z(x):=\sum_y e^{f(x,y)}\,k(x;y)\;
 \end{equation}
 so that the channel is correctly normalized.
\end{deph}

This is a sort of multiplicative equivalent of mixture families, as the exponent satisfies constraints similar to \eqref{mixture}.

\paragraph{Example.} 
 Let $\lin$ be a (finite-dimensional) linear subspace of $F(X,Y)$, and let $k\in K(X;Y)$. Then the closure:
 \begin{equation}\label{e}
  \e(k,\lin) := \bigg\{ \dfrac{e^{l(x,y)}}{Z(x)}\, k(x;y) \,\bigg|\,Z(x)=\sum_y e^{l(x,y)}\,k(x;y),\,l\in \lin \bigg\}\;
 \end{equation}
 is an exponential family, which again we call \emph{generated by $k$ and $\lin$}.

This family is in some sense ``dual'' to the family in \eqref{m}.
The duality is expressed more precisely by the following result.

\begin{thm}\label{dualmk}
 Let $\lin$ be a subspace of $F(X,Y)$. Let $p\in P(X)$ be strictly positive. 
 Let $k_0\in K(X;Y)$ be a strictly positive ``reference'' channel.
 Let $\e := \e(k_0,\lin)$ and $\m := \m(k,\lin)$. For $k'\in K(X;Y)$, the following conditions are equivalent:
 \begin{enumerate}
  \item $k'\in \m\cap\e$.
  \item $k'\in \e$, and $D_p(k||k')=\inf_{m\in\e} D_p(k||m)$.
  \item $k'\in \m$, and $D_p(k'||k_0)=\inf_{m\in\m} D_p(m||k_0)$.
 \end{enumerate}
 In particular, $k'$ is unique, and it is exactly $\pi_{\e}k$.
\end{thm}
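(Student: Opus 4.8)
The plan is to route everything through a single Pythagorean-type identity for $D_p$ on the dual pair $(\m,\e)$ and then read off all four implications from it. First I would record two ``gauge'' facts that let me feed the natural base points into that identity: $k\in\m$, since $k$ trivially satisfies its own defining constraints in \eqref{m}; and $k_0\in\e$, taking $l=0$ in \eqref{e}, so that $Z\equiv 1$. The computational core is the following difference formula. For an arbitrary channel $a$ and any two members $b=\tfrac{e^{l_b}}{Z_b}k_0$ and $c=\tfrac{e^{l_c}}{Z_c}k_0$ of $\e$, expanding $\log(c/b)=(l_c-l_b)+\log(Z_b/Z_c)$ and using $\sum_y a(x;y)=1$ gives
\[
D_p(a||b)-D_p(a||c)=\sum_{x,y}p(x)\,a(x;y)\,(l_c-l_b)(x,y)+\sum_x p(x)\log\frac{Z_b(x)}{Z_c(x)}\;.
\]
The point is that the right-hand side depends on $a$ only through normalization and the $p$-weighted moments $\sum_{x,y}p(x)\,a(x;y)\,l(x,y)$ for $l\in\lin$.

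Taking $c=k^\ast$ to be a point of $\m\cap\e$ and comparing $a$ with $k^\ast$ itself, the moment terms cancel because both lie in $\m$, which yields the Pythagorean identity
\[
D_p(a||b)=D_p(a||k^\ast)+D_p(k^\ast||b)\qquad\text{for all }a\in\m,\ b\in\e\;.
\]
Here I would note the one place requiring care: the cancellation uses the constraints of $\m$ read against the weight $p$. For the marginal-type families that motivate the paper the defining functions are localized in a single input $x$, so the $p$-factor only rescales each constraint and the cancellation is automatic; in general one reads the constraints in \eqref{m} with the weight $p(x)$.

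Granting one point $k^\ast\in\m\cap\e$ (condition~1), the two implications $1\Rightarrow2$ and $1\Rightarrow3$ fall out immediately. Setting $a=k$ gives $D_p(k||b)=D_p(k||k^\ast)+D_p(k^\ast||b)\ge D_p(k||k^\ast)$, with equality iff $D_p(k^\ast||b)=0$; since $p$ is strictly positive this forces $b=k^\ast$, which is exactly condition~2 together with uniqueness of the rI-projection. Setting $b=k_0$ gives $D_p(a||k_0)=D_p(a||k^\ast)+D_p(k^\ast||k_0)\ge D_p(k^\ast||k_0)$, with equality iff $a=k^\ast$, which is condition~3 with uniqueness. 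In particular any two points of $\m\cap\e$ coincide, so the intersection is at most one point, necessarily $\pi_\e k$.

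It remains to establish existence and the converses $2\Rightarrow1$, $3\Rightarrow1$. Since $K(X;Y)$ is compact and $\e$ is closed, $D_p(k||\cdot)$ attains its infimum on $\e$ at some $k'$; differentiating along the exponential curves with exponent $l'+tl$, $l\in\lin$, the stationarity condition reads $\sum_{x,y}p(x)\,k'(x;y)\,l(x,y)=\sum_{x,y}p(x)\,k(x;y)\,l(x,y)$ for every $l\in\lin$, i.e.\ $k'\in\m$, so $\m\cap\e\ne\emptyset$ and $2\Rightarrow1$ holds. Symmetrically, minimizing the strictly convex functional $D_p(\cdot||k_0)$ over the affine family $\m$ gives a minimizer whose Lagrange conditions put it in the form $e^{l}k_0/Z\in\e$, giving $3\Rightarrow1$; uniqueness of the minimizers in conditions~2 and~3 (strict convexity in the mixture coordinate, and in the exponential natural parameter) then identifies every such $k'$ with the intersection point already produced. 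I expect the genuine obstacle to be the boundary behaviour in this last step: the minimizer of $D_p(\cdot||k_0)$ over $\m$ may have vanishing entries, where the representation $e^{l}k_0/Z$ holds only in the closure and the naive Lagrange stationarity breaks down, so a limiting argument within the closed exponential family is needed — precisely the delicate part of the classical scalar case of \cite{csiszar}.
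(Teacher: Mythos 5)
Your proof is correct in substance and, at its core, runs on the same mechanism as the paper's: the identity $D_p(a||b)=D_p(a||k^\ast)+D_p(k^\ast||b)$, obtained from the log-linear form of members of $\e$ plus moment matching on $\m$, is exactly the paper's argument for $1\Leftrightarrow3$ (stated there only for $b=k_0$). Where you genuinely diverge is $1\Rightarrow2$: the paper parametrizes $\e$ by natural coordinates $\theta$, shows that the gradient of $\theta\mapsto D_p(k||k_\theta)$ vanishes iff $k_\theta\in\m$, and then invokes positive definiteness of the Hessian (a covariance matrix) to upgrade stationarity to global minimality; you instead read minimality off the same Pythagorean identity with $a=k$, which avoids the second-order argument entirely and gives uniqueness of the rI-projection for free. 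Your $2\Rightarrow1$ (stationarity of a minimizer that exists by compactness) is the paper's argument; your $3\Rightarrow1$ via Lagrange conditions on the affine family is one the paper never spells out --- it obtains $3\Rightarrow1$ only implicitly, from uniqueness of the minimizer once an intersection point is known to exist. You are also more explicit than the paper on the two weak points: existence of minimizers (the paper delegates this to ``we take the closures'') and the boundary behaviour of the closed exponential family, where the representation $e^{l}k_0/Z$ and the stationarity calculus break down --- a gap the paper's proof silently shares, since it only ever manipulates interior points $k_\theta$. One small correction: your aside that for marginal-type families the defining functions are ``localized in a single input $x$'', so that the $p$-weighting is an automatic rescaling, holds only when $I=[N]$; for $I\subsetneq[N]$ the weighted and unweighted constraints genuinely differ, so your fallback --- reading the constraints of \eqref{m} against the weight $p(x)$ --- is not optional but necessary, and it is indeed how the paper itself uses $\m$ (see the proofs of Proposition \ref{presc} and Theorem \ref{dualmk}), despite \eqref{m} being written without the weight.
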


Geometrically, we are saying that $k'=\pi_{\e} k$, the rI-projection of $k$ on $\e$.
We call the mapping $k\to k'$ the \emph{rI-projection operator}, and the mapping $k_0\to k'$ the \emph{I-projection operator}
These are the channel equivalent of the I-projections introduced in \cite{iproj}
and generalized in \cite{iriv}. The result is illustrated in Figure \ref{fig:dualmk}.

\begin{figure}[H]
 \centering
 \includegraphics[scale=1,keepaspectratio=true]{./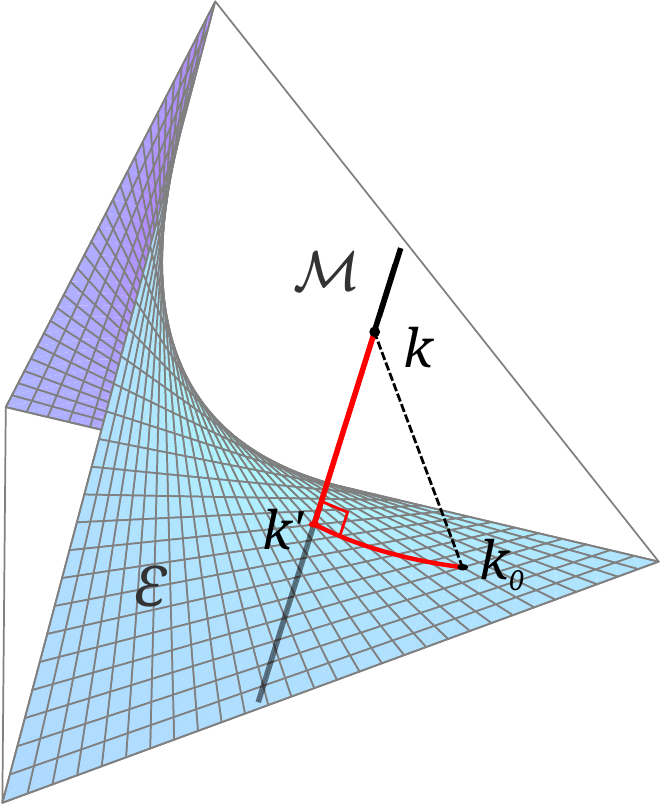}
 \caption{Illustration of Theorem \ref{dualmk}. The point $k'$ at the intersection minimizes on $\e$ the distance from $k$, and minimizes on $\m$ the distance from $k_0$.}
 \label{fig:dualmk}
\end{figure}

As suggested by Figure \ref{fig:dualmk}, I- and rI-projections on exponential families satisfy a Pythagoras-type equality.
For any $m\in \e$, with $\e$ exponential family:
\begin{equation}\label{py}
  D_p(k||m)= D_p(k||\pi_{\e}k) + D_p( \pi_{\e}k||m)\;.
\end{equation}
This statement follows directly from the analogous statement for probability distribution found in \cite{hierarchy},
after applying the chain rule \eqref{pkdiv}.

\section{Algorithm}\label{algo}

The algorithm can be considered as a channel equivalent of the iterative scaling procedure for joint distributions, which can be found in Chapter 5 of \cite{csiszar}.
Translated into our language, that theorem says the following:
\begin{thm}[Theorem 5.1 of \cite{csiszar}]\label{jointit}
Let $\lin_1,\dots,\lin_n$ be mixture families of joint distributions with nonempty intersection $\lin$.
Denote by $\Sigma_iq$ the $I$-projection of a joint $q$ onto the family $\lin_i$.
Consider the sequence that starts at $q^0$ and is defined iteratively by:
 \begin{equation}
  q^j := \Sigma_{(j \modu n)} q^{j-1}\;.
 \end{equation}

 Then $\{q^j\}$ converges, and the limit point is the $I$-projection of $q^0$ onto $\lin$, i.e.
 if we call:
 \begin{equation}
  \lim_{i\to\infty} q^i := q\;,
 \end{equation}
 then $q\in \lin$, and for any $\bar q\in \lin$:
 \begin{equation}
  D(q^0||\bar q) =  D(q^0||q) + D(q||\bar q)\;.
 \end{equation}
\end{thm}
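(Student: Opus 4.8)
The plan is to follow the classical convergence proof for iterative scaling in \cite{csiszar}, but to organize it around the exponential family anchored at $q^0$ and to close with the duality of Theorem \ref{dualmk}. First I would record the form of a single step. If $\lin_i$ is cut out by constraint functions spanning a linear space $\lin_i'$, then the $I$-projection $\Sigma_i$ multiplies the current joint by a factor $e^{g}/Z$ with $g\in\lin_i'$ (the Lagrange-multiplier form of the minimizer). Consequently every iterate stays inside the single exponential family $E$ generated by $q^0$ and $\lin':=\sum_i\lin_i'$, so that once a limit is shown to exist it must lie in the intersection $\lin\cap E$ that Theorem \ref{dualmk} singles out.

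Next I would prove convergence. For each $j$ and each $\bar q\in\lin\subseteq\lin_{(j\modu n)}$, a single projection step onto a mixture family obeys the decomposition $D(\bar q||q^{j-1})=D(\bar q||q^{j})+D(q^{j}||q^{j-1})$. Telescoping from $q^0$ to $q^{J}$ gives $\sum_{j=1}^{J}D(q^{j}||q^{j-1})=D(\bar q||q^0)-D(\bar q||q^{J})\le D(\bar q||q^0)$, so the increments are summable and $D(q^{j}||q^{j-1})\to 0$. Because $q^0$ is strictly positive and each iterate differs from it by a strictly positive exponential factor, the iterates remain in a compact subset of the open simplex; extracting a convergent subsequence and using the vanishing increments together with the closedness of the $\lin_i$ then pins the limit $q$ to $\lin$, and the monotone telescoped bound upgrades subsequential convergence to convergence of the whole sequence.

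Finally I would identify $q$ and read off the stated identity. Since $q\in\lin\cap E$, Theorem \ref{dualmk} identifies $q$ as the unique projection point: it is at once the $I$-projection of $q^0$ onto $\lin$ and the projection determined by the dual exponential family $E$. The Pythagorean relation \eqref{py} is attached to this exponential family and carries exactly the orientation in the claim, with $q^0$ in the first argument and the projection $q$ in the middle; applied with $\bar q$ a comparison element governed by $E$ and transported through the duality $\lin\cap E=\{q\}$, it yields $D(q^0||\bar q)=D(q^0||q)+D(q||\bar q)$ for $\bar q\in\lin$, which in particular re-proves both $q\in\lin$ and its uniqueness.

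I expect the delicate point to be the convergence step rather than the final identity. The identity is essentially forced once $q$ is known to be the intersection point of Theorem \ref{dualmk}, but showing that the iterates actually converge — not merely that the divergence decreases — requires controlling two things at once: summability of the increments (handled by the telescoped bound) and a uniform bound keeping the iterates away from the boundary of the simplex, where $D$ is discontinuous. The strict positivity of $q^0$, inherited by every $q^{j}$ through the strictly positive exponential factors, is exactly what supplies this bound and makes the compactness argument go through.
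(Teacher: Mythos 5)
Your convergence argument is sound and is essentially the classical one (note that the paper itself does not prove Theorem \ref{jointit}: it cites Csisz\'ar--Shields, whose proof follows exactly your telescoping skeleton): the single-step identity $D(\bar q||q^{j-1})=D(\bar q||q^{j})+D(q^{j}||q^{j-1})$ for $\bar q\in\lin$, summability of the increments, subsequence extraction, and monotonicity of $j\mapsto D(\bar q||q^{j})$ to upgrade to convergence of the whole sequence. One inaccuracy there: the iterates need \emph{not} stay in a compact subset of the open simplex. If every element of $\lin$ has zeros (e.g.\ a prescribed marginal vanishes somewhere), the scaling factors vanish and the limit lies on the boundary; strict positivity of $q^0$ does not prevent this. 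Interior compactness is not needed --- compactness of the closed simplex together with the telescoped identity and lower semicontinuity of $D$ suffices, which is also why the limit is only guaranteed to lie in the \emph{closure} of the exponential family anchored at $q^0$.

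The genuine gap is your final step. The Pythagorean relation \eqref{py} holds only for comparison points $m$ in the exponential family $\e$; your $\bar q$ ranges over the mixture family $\lin$, and ``transporting through the duality $\lin\cap E=\{q\}$'' is vacuous, since that intersection is the single point $q$ and \eqref{py} says nothing about any other $\bar q\in\lin$. What your telescoping actually proves is
\begin{equation*}
D(\bar q\,||\,q^0)\;=\;D(\bar q\,||\,q)\;+\;D(q\,||\,q^0)\qquad\text{for all }\bar q\in\lin\;,
\end{equation*}
with the running variable in the \emph{first} slot --- which is the orientation of Csisz\'ar--Shields' Theorems 3.2 and 5.1, is the one consistent with this paper's own convention that the I-projection of $q^0$ minimizes $D(\cdot||q^0)$ over the mixture family, and is also exactly what equation \eqref{mk2} in the proof of Theorem \ref{dualmk} delivers if you prefer to close via duality. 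The transposed identity displayed in the statement cannot be derived because it is false in general: with $n=1$ and $\lin_1=\{s:\,s(a)=\mu(a)\}$ one has $q(a,b)=q^0(a,b)\,\mu(a)/q^0(a)$, and a direct chain-rule computation gives $D(q^0||\bar q)-D(q^0||q)-D(q||\bar q)=\sum_a\big(q^0(a)-\mu(a)\big)\,D\big(q^0(\cdot|a)\,||\,\bar q(\cdot|a)\big)$, which is nonzero for generic $\bar q\in\lin_1$. So the statement as printed transposes the divergence arguments of the source theorem, and your attempt to manufacture that orientation out of \eqref{py} is precisely where the proof breaks; the repair is to state and prove the identity in the first-slot orientation above (the same transposition propagates to the identity displayed in Theorem \ref{convergence}, whose proof applies Theorem \ref{jointit} verbatim).
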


Our result depends on the theorem above, in the following way. We define a marginal procedure for channels, 
which in general depends on the choice of an input distribution. We define mixture families of channels with 
fixed marginals in a way compatible with the equivalent for joints.
We then define scalings of channels, and prove that they give the desired result at the joint level.
This makes it possible to translate the statement of Theorem \ref{jointit} to an analogous statement for channels,
Theorem \ref{convergence}.

Unless otherwise stated, all the input distributions here will be assumed strictly positive.
All our proofs can be found in the appendix.

\begin{deph}
 Consider an input distribution $p\in P(X)$. Let $I\subseteq [N], J\subseteq [M], J\ne\emptyset$. We define the \emph{marginal operator} for channels as:
 \begin{equation}\label{margop}
  k(x;y)\mapsto k(x_I;y_J):= \sum_{x_{I^c},y_{J^c}}p(x_{I^c}|x_I)\,k(x_I,x_{I^c};y_J,y_{J^c})\;,
 \end{equation}
 given the input $p$. 
\end{deph}

\begin{prop}\label{jeq}
 Defined as above, $k(X_I;Y_J)$ is exactly the conditional probability for the marginal $pk(X_I,Y_J)$.
 In other words, $k(X,Y)$ has marginal $k(X_I;Y_J)$ if and only if $pk(X,Y)$ has marginal $pk(X_I,Y_J)$.
\end{prop}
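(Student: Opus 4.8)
The plan is to reduce everything to a single algebraic identity, namely
\begin{equation}\label{eq:keyid}
 pk(x_I,y_J)=p(x_I)\,k(x_I;y_J)\;,
\end{equation}
from which both the main assertion and its ``if and only if'' reformulation follow at once. The marginal operator \eqref{margop} was set up precisely so that this holds, so the whole proof is an unwinding of definitions rather than a genuine argument.

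First I would compute the joint marginal of $pk$ directly. Writing $x=(x_I,x_{I^c})$ and $y=(y_J,y_{J^c})$ and summing out the complementary coordinates,
\begin{equation}
 pk(x_I,y_J)=\sum_{x_{I^c},y_{J^c}} p(x_I,x_{I^c})\,k(x_I,x_{I^c};y_J,y_{J^c})\;.
\end{equation}
Since $p$ is strictly positive, $p(x_I)>0$ and the conditional $p(x_{I^c}|x_I)=p(x_I,x_{I^c})/p(x_I)$ is well defined. Factoring $p(x_I,x_{I^c})=p(x_I)\,p(x_{I^c}|x_I)$ and pulling the $x_I$-only factor out of the sum gives
\begin{equation}
 pk(x_I,y_J)=p(x_I)\sum_{x_{I^c},y_{J^c}} p(x_{I^c}|x_I)\,k(x_I,x_{I^c};y_J,y_{J^c})=p(x_I)\,k(x_I;y_J)\;,
\end{equation}
where the last step is exactly the definition \eqref{margop} of the marginal operator. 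This establishes \eqref{eq:keyid}.

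From \eqref{eq:keyid} the first claim is immediate. Summing out $y_J$ and using $\sum_{y_J}k(x_I;y_J)=1$ (which I would check separately: it follows from $\sum_y k(x;y)=1$ and $\sum_{x_{I^c}}p(x_{I^c}|x_I)=1$, and also confirms that $k(X_I;Y_J)$ is a bona fide channel), one sees that the $X_I$-marginal of $pk(x_I,y_J)$ equals $p(x_I)$, which is just the input marginal $pk(x_I)$. Dividing \eqref{eq:keyid} by this positive quantity then yields $k(x_I;y_J)=pk(y_J|x_I)$, i.e.\ $k(X_I;Y_J)$ is the conditional probability of the joint marginal $pk(X_I,Y_J)$. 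The reformulation follows because $p(x_I)$ is a fixed strictly positive factor independent of the channel: two channels have the same channel marginal $k(X_I;Y_J)$ if and only if their associated joints $pk$ have the same joint marginal $pk(X_I,Y_J)=p(x_I)\,k(x_I;y_J)$.

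I do not expect a real obstacle here, as the content is entirely in the bookkeeping of the definitions. The only point requiring care is the strict positivity of $p$, which guarantees $p(x_I)>0$ so that all conditionals are well defined and the final division is legitimate; this is exactly the standing assumption under which the marginal operator was introduced. The hypothesis $J\neq\emptyset$ plays no role in the computation beyond ensuring that the reduced object is a genuine channel with a nontrivial output.
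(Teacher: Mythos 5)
Your proof is correct and is essentially the same as the paper's: both reduce the proposition to the identity $pk(x_I,y_J)=p(x_I)\,k(x_I;y_J)$, established by expanding the definition of the marginal operator and factoring $p(x_I,x_{I^c})=p(x_I)\,p(x_{I^c}|x_I)$ (you simply run the chain of equalities in the opposite direction). The extra details you supply --- the normalization check $\sum_{y_J}k(x_I;y_J)=1$ and the explicit division by $p(x_I)>0$ --- are fine and only make the paper's implicit bookkeeping explicit.
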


\begin{deph}
 Consider an input distribution $p\in P(X)$.
 Let $I\subseteq [N], J\subseteq [M]$, $J\ne\emptyset$. We define the mixture families $\m_{IJ}(\bar{k})$ as:
 \begin{equation}
  \m_{IJ}(\bar{k}) := \big\{k(x_{1\dots n};y_{1\dots m}) \,\big|\, p(x_I)\,k(x_I;y_J) =  p(x_I)\, \bar k(x_I;y_J) \big\}\;,
 \end{equation}
 where the $\bar k(x_I;y_J)$ are prescribed channel marginals.
 
 Analogously, let $\bar q$ be a probability distribution in $P(X_I,Y_J)$. 
 We define the mixture families:
 \begin{equation}
  \mathcal{J}_{IJ}(\bar q) := \big\{q(x_{1\dots n},y_{1\dots m}) \,\big|\, q(x_I,y_J) = \bar q(x_I,y_J) \big\}\;.
 \end{equation}
\end{deph}

Proposition \ref{jeq} says that, for any $k\in K(X;Y)$, for any (strictly positive) $p\in P(X)$, and for any $I\subseteq [N], J\subseteq [M]$ :
\begin{equation}\label{corresp}
 k\in \m_{IJ}(\bar{k}) \quad \Leftrightarrow \quad pk \in  \mathcal{J}_{IJ}(p \bar k)\;.
\end{equation}

\begin{prop}\label{presc}
 $\m_{IJ}(\bar{k})$ is exactly the set $\m(\bar{k},\lin)$ of equation \eqref{m}, where as $\lin$
 we take the space $F_{IJ}$ of functions which only depend on the nodes in $I,J$.
\end{prop}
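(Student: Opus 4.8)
The plan is to route the whole statement through the joint level, where both descriptions reduce to the standard ``fixing a marginal equals fixing a family of expectations'' dictionary for probability distributions, and then to pull the identification back to channels using the correspondence already established. First I would unfold $\m_{IJ}(\bar k)$. Since $p$ is strictly positive, so is the marginal $p(x_I)$, and the defining equation $p(x_I)\,k(x_I;y_J)=p(x_I)\,\bar k(x_I;y_J)$ is equivalent to $k(x_I;y_J)=\bar k(x_I;y_J)$ for all $x_I,y_J$. By Proposition \ref{jeq} and the correspondence \eqref{corresp}, this is in turn equivalent to $pk\in\mathcal{J}_{IJ}(p\bar k)$, i.e.\ to the single marginal condition $pk(x_I,y_J)=p\bar k(x_I,y_J)$ on the joint distribution $q:=pk$ (strict positivity of $p$ makes $k\mapsto pk$ injective, so membership transfers losslessly in both directions).

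Next I would handle the joint membership $q\in\mathcal{J}_{IJ}(p\bar k)$ by a basis argument. The indicator functions $\{\delta(x_I,a)\,\delta(y_J,b)\}_{a,b}$ form a basis of $F_{IJ}$, and for any joint $q$ and any values $a,b$ one has $\sum_{x,y}q(x,y)\,\delta(x_I,a)\,\delta(y_J,b)=q(a,b)$, the value of the marginal $q(x_I,y_J)$ at $(a,b)$; hence fixing the marginal is the same as fixing $\sum_{x,y}q(x,y)\,l(x,y)$ for every $l\in F_{IJ}$. Rewriting $\sum_{x,y}q(x,y)\,l(x,y)=\sum_{x,y}p(x)\,k(x;y)\,l(x,y)$, and likewise for $p\bar k$, the condition becomes agreement of $k$ and $\bar k$ against every $l\in F_{IJ}$ in the pairing weighted by the input $p$, which is exactly the defining condition of $\m(\bar k,F_{IJ})$ from \eqref{m}. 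Matching the two constraint systems term by term over the basis then yields both inclusions, and hence the asserted equality $\m_{IJ}(\bar k)=\m(\bar k,F_{IJ})$.

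The step I expect to be the main obstacle is keeping the role of the input distribution $p$ straight. The marginal operator \eqref{margop} averages the channel over $x_{I^c}$ using the conditional $p(x_{I^c}\mid x_I)$, whereas a naive reading of the pairing in \eqref{m} would sum over $(x,y)$ with no weight; these two prescriptions agree only after passing to the joint $q=pk$, where the unweighted joint pairing against $F_{IJ}$ becomes precisely the $p$-weighted channel pairing (the weight $p$ being the same one that appears in $D_p$). Making this bookkeeping precise, and invoking strict positivity of $p(x_I)$ to cancel the common prefactor in the definition of $\m_{IJ}(\bar k)$, is the one place that needs care; everything else is the elementary fact that an affine subset of $K(X;Y)$ is determined by its annihilating functionals, here supplied by the indicator basis of $F_{IJ}$.
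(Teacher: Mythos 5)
Your proof is correct and follows essentially the same route as the paper's: both reduce membership in $\m(\bar k,F_{IJ})$ to equality of the expectations $\E_{pk}[f]=\E_{p\bar k}[f]$ for all $f\in F_{IJ}$, observe (via marginalization/indicators) that these expectations depend only on the $(I,J)$-marginals of the joints $pk$ and $p\bar k$, and then cancel the strictly positive factor $p(x_I)$ to recover $k(x_I;y_J)=\bar k(x_I;y_J)$. The ``bookkeeping'' issue you flag is real, and you resolve it exactly as the paper's proof implicitly does, by reading the pairing in \eqref{m} as the $p$-weighted one $\E_{pk}[\cdot]$ (the paper's parenthetical ``with strict positivity of $p$'' is doing the same work).
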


Just as in \cite{csiszar}, the I-projections for single marginals can be obtained by scaling. 
For joint distributions the scaling is done in this way: if $\bar p(X_I,Y_J)$ is a ``prescribed'' marginal, then:
\begin{equation}
 \s^{\bar q}_{IJ} p\,(X,Y) := p(X,Y)\,\dfrac{\bar q(X_I,Y_J)}{p(X_I,Y_J)}
\end{equation}
will have the prescribed marginals, and even be the I-projection of $p$ on $\mathcal{J}_{IJ}(\bar q)$. i.e., 
$\s^{\bar q}_{IJ} p \in\mathcal{J}_{IJ}(\bar q)$, and:
\begin{equation}
 D(p||\bar q) = D(p||\s^{\bar q}_{IJ} p ) + D(\s^{\bar q}_{IJ} p ||\bar q)\;.
\end{equation}
For the proof, see Chapter 3 and Section 5.1 of \cite{csiszar}.

In the case of channels, the scaling is instead done in two steps.

\begin{deph}
 We define the (unnormalized) \emph{$IJ$-scaling} as the operator $\s_{IJ}^{\bar k}:K(X;Y)\to F(X,Y)$, mapping $k$ to:
 \begin{equation}\label{usk}
  \s_{IJ}^{\bar k} k \, (x,y) := k(x_I , x_{I^c} ; y_J , y_{J^c} ) \frac{ \bar{k}(x_I ; y_J) }{ k(x_I ; y_J) }  \;.
 \end{equation}
\end{deph}

We have that $\s_{IJ}^{\bar k} k$ is \emph{not} an element of $ \m_{IJ}$, as in general it is not even in
$K(X;Y)$ (i.e. a correctly normalized channel). However, at the joint level this corresponds exactly to the joint scaling:

\begin{prop}\label{jlevel}
 Let $p\in P(X)$, $k\in K(X;Y)$, and $\bar k \in K(X_I;Y_J)$. Then:
 \begin{equation}
  p (\s_{IJ}^{\bar k} k) = \s_{IJ}^{p \bar k} pk\;.
 \end{equation}
\end{prop}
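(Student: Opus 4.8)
The plan is to prove the identity by direct computation, unfolding both sides according to their definitions and checking that they agree as functions on $X\times Y$. Both sides are unnormalized objects in $F(X,Y)$, so I do not need to worry about normalization: I only need pointwise equality of the two expressions at an arbitrary $(x,y)=(x_I,x_{I^c},y_J,y_{J^c})$.

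First I would expand the left-hand side. By definition of the joint formed from $p$ and a function, $p(\s_{IJ}^{\bar k}k)(x,y) = p(x)\,\s_{IJ}^{\bar k}k\,(x,y)$, and substituting \eqref{usk} this equals
\begin{equation}
 p(x)\,k(x_I,x_{I^c};y_J,y_{J^c})\,\frac{\bar k(x_I;y_J)}{k(x_I;y_J)}\;.
\end{equation}
Next I would expand the right-hand side using the joint-level scaling formula, giving
\begin{equation}
 \s_{IJ}^{p\bar k}pk\,(x,y) = p(x)\,k(x;y)\,\frac{p\bar k(x_I,y_J)}{pk(x_I,y_J)}\;,
\end{equation}
where $pk(x,y)=p(x)k(x;y)$ and $p\bar k(x_I,y_J)=p(x_I)\bar k(x_I;y_J)$, and $pk(x_I,y_J)$ is the joint marginal of $pk$ on the coordinates $X_I,Y_J$. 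Comparing the two expressions, it suffices to show that the two scaling ratios agree, i.e. that $\bar k(x_I;y_J)/k(x_I;y_J) = p\bar k(x_I,y_J)/pk(x_I,y_J)$.

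The key step, and the only place where real content enters, is to rewrite the joint marginal $pk(x_I,y_J)$ in terms of the channel marginal $k(x_I;y_J)$ defined in \eqref{margop}. Here I would invoke Proposition \ref{jeq}, which states precisely that $k(x_I;y_J)$ is the conditional probability of the joint marginal $pk(x_I,y_J)$; equivalently, $pk(x_I,y_J) = p(x_I)\,k(x_I;y_J)$. The same relation applied to the prescribed marginal gives $p\bar k(x_I,y_J)=p(x_I)\bar k(x_I;y_J)$. Substituting both into the right-hand ratio, the factors $p(x_I)$ cancel, leaving $\bar k(x_I;y_J)/k(x_I;y_J)$, which matches the left-hand side. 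Since $p$ is strictly positive, all the marginals $p(x_I)$ and $k(x_I;y_J)$ appearing in denominators are nonzero, so the division is legitimate.

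The main obstacle is conceptual rather than computational: one must keep the two different notions of ``marginal'' straight, namely the channel marginal $k(x_I;y_J)$ produced by the input-weighted operator \eqref{margop} versus the ordinary joint marginal $pk(x_I,y_J)$ of the distribution $pk$. The entire proposition hinges on the bridge between them supplied by Proposition \ref{jeq}, and once that correspondence $pk(x_I,y_J)=p(x_I)k(x_I;y_J)$ is in hand, the verification reduces to cancellation of the common factor $p(x_I)$.
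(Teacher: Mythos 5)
Your proposal is correct and follows essentially the same route as the paper: expand both sides pointwise, and use the identity $pk(x_I,y_J)=p(x_I)\,k(x_I;y_J)$ (Proposition \ref{jeq}) to match the scaling ratios, with the factor $p(x_I)$ cancelling. The paper organizes this as a single chain of equalities from left side to right side (inserting $p(x_I)/p(x_I)$ rather than cancelling it), but the content is identical.
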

This implies that $p \s_{IJ}^{\bar k} k$ is the $I$-projection of $pk$ to the family $\mathcal{J}_{IJ} (p \bar k)$. 

\begin{deph}
 We define the \emph{normalized $IJ$-scaling} as the operator $N \s_{IJ}^{\bar k}:K(X,Y)\to K(X;Y)$, mapping $k$ to:
 \begin{equation}\label{sk}
  N \s_{IJ}^{\bar k} k \, (x,y) :=  \dfrac{1}{Z(x)}\, \s_{IJ}^{\bar k} k \, (x,y) \;,
 \end{equation}
  where:
 \begin{equation}
  Z(x) :=  \sum_{y'}\s_{IJ}^{\bar k} k \, (x,y')\;.
 \end{equation}
\end{deph}

At the joint level, this corresponds to scaling of the input in the following way:
\begin{prop}\label{inscale}
  Let $p\in P(X)$, $k\in K(X;Y)$, and $\bar k \in K(X_I;Y_J)$. Then:
 \begin{equation}
  p( N \s_{IJ}^{\bar k} k) = \s_{[N]}^{p} \s_{IJ}^{p \bar k} pk\;.
 \end{equation}
\end{prop}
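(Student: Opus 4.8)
The plan is to verify the identity pointwise in $(x,y)$, reducing both sides to the joint scaling and invoking Proposition \ref{jlevel} to dispatch the inner scaling on the right-hand side. First I would rewrite the inner term using Proposition \ref{jlevel}, which gives $\s_{IJ}^{p\bar k} pk\,(x,y) = p(\s_{IJ}^{\bar k} k)\,(x,y) = p(x)\,\s_{IJ}^{\bar k} k\,(x,y)$. The content of this step is that the unnormalized channel scaling, once paired with the input $p$, already coincides with the ordinary joint scaling on the marginal $(X_I,Y_J)$, so no further work is needed to handle it.

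Next I would compute the input marginal of this joint. Summing over $y$ and recalling the normalization constant $Z(x) = \sum_{y'}\s_{IJ}^{\bar k} k\,(x,y')$ from \eqref{sk}, the input marginal becomes $\sum_y p(x)\,\s_{IJ}^{\bar k} k\,(x,y) = p(x)\,Z(x)$. This is the crucial observation: passing from the normalized to the unnormalized channel scaling multiplies the input marginal by exactly the factor $Z(x)$, so correcting that factor at the joint level is what the outer operator must accomplish.

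The outer operator $\s_{[N]}^{p}$ is the joint scaling in the degenerate case $I=[N]$, $J=\emptyset$, with prescribed marginal $p\in P(X)$; applied to a joint $q$ it multiplies by $p(X)/q(X)$, where $q(X)=\sum_y q(X,y)$. Taking $q(x,y)=p(x)\,\s_{IJ}^{\bar k} k\,(x,y)$ with input marginal $q(x)=p(x)\,Z(x)$, the scaling factor is $p(x)/\big(p(x)\,Z(x)\big)=1/Z(x)$, which yields $p(x)\,\tfrac{1}{Z(x)}\,\s_{IJ}^{\bar k} k\,(x,y)$. By the definition \eqref{sk} this equals $p(x)\,N\s_{IJ}^{\bar k} k\,(x,y) = p(N\s_{IJ}^{\bar k} k)\,(x,y)$, which is precisely the left-hand side, completing the argument.

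The only step I would take real care with — and the nearest thing to an obstacle here — is making explicit that $\s_{[N]}^{p}$ is the $J=\emptyset$ joint scaling fixing the full input marginal to $p$, together with the verification that the input marginal of $p\,\s_{IJ}^{\bar k} k$ is $p(x)\,Z(x)$. Once these two facts are pinned down, the factor $Z(x)$ cancels cleanly between the inner and outer scalings, and the remaining equality is a direct substitution of definitions rather than a genuine difficulty.
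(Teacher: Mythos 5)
Your proof is correct and takes essentially the same route as the paper's: both arguments invoke Proposition \ref{jlevel} to identify $p\,\s_{IJ}^{\bar k}k$ with $\s_{IJ}^{p\bar k}pk$, note that its input marginal is $p(x)\,Z(x)$, and cancel $Z(x)$ against the outer scaling $\s_{[N]}^{p}$ (whose meaning as the joint scaling prescribing the full input marginal $p$ you rightly make explicit). The only difference is cosmetic: you unwind the right-hand side down to the left-hand side, whereas the paper expands the left-hand side up to the right.
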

This implies that $p N \s_{IJ}^{\bar k} k$ is the $I$-projection of $p\s_{IJ}^{\bar k} k$ to the family with prescribed input $p(X)$. 
For brevity, let's call this family $\mathcal{J}_{[N]}(p)$.

Now $N \s_{IJ}^{\bar k} k$ is an element of $K(X;Y)$, but still \emph{not} of $ \m_{IJ}$. 
However, if we iterate the operator $N \s_{IJ}^{\bar k}$, the resulting sequence will converge to the 
projector on $ \m_{IJ}$. More in general, the following result holds:

\begin{thm}[Main result]\label{convergence}
 For $1\le i \le n$, let  $I_i\subseteq [N]$ be subsets of $[N]$ and $J_i\subseteq [M]$ be nonempty 
 subsets of $[M]$. 
 Take an input distribution $p\in P(X)$ and a channel $\bar k\in K(X,Y)$.
 Define the mixture families of prescribed marginals:
 \begin{equation}
  \m_i = \m_{I_iJ_i}(\bar k(X_{I_i},Y_{J_i}))\;,
 \end{equation}
 and their intersection, which is also a mixture family (nonempty, as it contains at least $\bar k$):
 \begin{equation}
  \m := \bigcap_i \m_i\;.
 \end{equation}
 Choose a (different) channel $k^0\in K(X,Y)$ and consider the sequence of normalized scalings
 starting at $k^0$ and defined iteratively by:
 \begin{equation}
  k^j := N\sigma_{I_{(j \modu n)}J_{(j\modu n)}} k^{j-1}\;.
 \end{equation}
 Then:
 \begin{itemize}
  \item $k^i$ converges to a limit channel:
 \begin{equation}
  \lim_{i\to\infty} k^i := l\;;
 \end{equation}
 \item The limit $l$ is the $I$-projection of $k^0$ on $\m$, i.e. $l\in\m$ and:
 \begin{equation}
  D_p(k^0||\bar k) =  D_p(k^0||l) + D_p(l||\bar k)\;.
 \end{equation}
 \end{itemize}
\end{thm}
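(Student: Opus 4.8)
The plan is to push the whole iteration down to the level of joint distributions, where Theorem \ref{jointit} applies essentially verbatim, and then lift the conclusion back up by the chain rule \eqref{pkdiv}. Set $q^j := p\,k^j$ and $\bar q := p\,\bar k$. By Proposition \ref{inscale}, one step of the normalized channel scaling corresponds at the joint level to \emph{two} successive joint scalings,
\[
 q^j \;=\; p\,\big(N\sigma_{I_{(j\modu n)}J_{(j\modu n)}}k^{j-1}\big) \;=\; \sigma_{[N]}^{p}\,\sigma_{I_{(j\modu n)}J_{(j\modu n)}}^{p\bar k}\,q^{j-1}\,.
\]
Thus the joint sequence $\{q^j\}$ is itself an iterative-scaling sequence, now running with period $2n$ through the list of mixture families
\[
 \mathcal{J}_{I_1J_1}(p\bar k),\ \mathcal{J}_{[N]}(p),\ \dots,\ \mathcal{J}_{I_nJ_n}(p\bar k),\ \mathcal{J}_{[N]}(p)\,.
\]
By the remarks following Propositions \ref{jlevel} and \ref{inscale}, each of the two scaling types is exactly the I-projection onto its corresponding family, so Theorem \ref{jointit} applies to this length-$2n$ cycle and the full single-scaling sequence converges.

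Next I would identify the intersection. Collapsing the repeated copies of $\mathcal{J}_{[N]}(p)$, the intersection of all $2n$ families is $\mathcal{J} := \mathcal{J}_{[N]}(p)\cap\bigcap_i\mathcal{J}_{I_iJ_i}(p\bar k)$. By the correspondence \eqref{corresp} we have $k\in\m_i \Leftrightarrow pk\in\mathcal{J}_{I_iJ_i}(p\bar k)$, and since $p$ is strictly positive every $q\in\mathcal{J}$ has input marginal $p$ and hence factors uniquely as $q=p\,k$ with $k=q(Y|X)$. Therefore $k\mapsto pk$ is a bijection from $\m=\bigcap_i\m_i$ onto $\mathcal{J}$. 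Theorem \ref{jointit} then yields that $\{q^j\}$ converges to the I-projection $q$ of $q^0=pk^0$ onto $\mathcal{J}$, with $q\in\mathcal{J}$.

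To transfer this to channels, note that each $q^j=pk^j$ is produced immediately after a $\sigma_{[N]}^{p}$ step and so has input marginal exactly $p$; since $p$ is strictly positive, $k^j=q^j(Y|X)$ depends continuously on $q^j$, and $q^j\to q$ forces $k^j\to l:=q(Y|X)$, while $q\in\mathcal{J}$ gives $l\in\m$. For the Pythagoras identity, observe that $\bar k\in\m_i$ for every $i$, so $\bar q=p\bar k\in\mathcal{J}$ and may serve as the test point in Theorem \ref{jointit}, giving $D(q^0||\bar q)=D(q^0||q)+D(q||\bar q)$. All three joints $q^0=pk^0$, $q=pl$, and $\bar q=p\bar k$ share the input marginal $p$, so applying the chain rule \eqref{pkdiv} to each term makes the input contribution vanish (it is $D(p||p)=0$) and replaces each $D$ by the corresponding $D_p$, producing $D_p(k^0||\bar k)=D_p(k^0||l)+D_p(l||\bar k)$, as required.

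The one step that genuinely requires care is the reduction of the normalization operator $N$ to a \emph{bona fide} I-projection onto the fixed-input family $\mathcal{J}_{[N]}(p)$: this is precisely what turns the period-$2n$ joint sequence into a legitimate iterative scaling in the sense of Theorem \ref{jointit}, and it is exactly the content of Proposition \ref{inscale}. Everything else is bookkeeping, namely verifying that the collapsed intersection of the $2n$ families equals $p\,\m$ and that strict positivity of $p$ simultaneously makes $k\mapsto pk$ a bijection and lets the chain rule annihilate the input terms; one should also note in passing that strict positivity of $p$ (and of the reference data) is what keeps all the intermediate scalings well defined.
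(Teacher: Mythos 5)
Your proof is correct and follows essentially the same route as the paper's: lift the channel iteration to the joint level via Propositions \ref{jlevel} and \ref{inscale}, apply Theorem \ref{jointit} to the cycle of $2n$ joint mixture families (the marginal-prescription families $\mathcal{J}_{I_iJ_i}(p\bar k)$ interleaved with the fixed-input family $\mathcal{J}_{[N]}(p)$), identify the intersection with $\m$ via \eqref{corresp}, and transfer the limit and the Pythagorean identity back to channels using strict positivity of $p$ and the chain rule \eqref{pkdiv}. The only difference is cosmetic: your ordering of the alternating families (marginal scaling first, then input normalization) actually matches Proposition \ref{inscale} more faithfully than the paper's own indexing, under which it is an odd-numbered rather than even-numbered subsequence that equals $p\,k^j$ --- a harmless slip, since convergence of the full single-scaling sequence makes every subsequence converge to the same limit.
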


The proof can be found in the appendix.

To apply the Theorem \ref{convergence} in our algorithm, we choose as initial channel $k^0$ exactly the reference channel $k_0$ of Theorem \ref{dualmk},
usually the uniform channel. 
As $\bar k$ we take exactly the ``prescription channel'' $k$ of Theorem \ref{dualmk}, 
i.e. the channel which has the desired marginals.
The result of the iterative scaling will be the rI-projection of $k$ on the desired exponential family.

\section{Applications}\label{applic}

\subsection{Synergy Measures}

The algorithm presented here permits to compute the decompositions of mutual information between inputs and outputs
in \cite{olbrich} and \cite{us}. We give here examples of computations of \emph{pairwise synergy} as an rI-projection
for channels, as described 
in \cite{us}. It is not within the scope of this article to motivate this measure,
we rather want to show how it can be computed.

Let $k$ be a channel from $X=(X_1,X_2)$ to $Y$. Let $p\in P(X)$ be a strictly positive input distribution. 
We define in \cite{us} the synergy of $k$ as:
\begin{equation}\label{syn}
 d_2(k):= D_p(k||\e_1)\;,
\end{equation} 
where $\e_1$ is the (closure of the) family of channels in the form:
\begin{equation}
 m(x_1,x_2;y)=\dfrac{1}{Z(X)} \exp\big( \phi_0(x_1,x_2)+\phi_1(x_1,y)+\phi_2(x_2,y) \big)\;,
\end{equation}
where:
\begin{equation}
 Z(x):=\sum_y\exp\big( \phi_0(x_1,x_2)+\phi_1(x_1,y)+\phi_2(x_2,y) \big)\;,
\end{equation}
and:
\begin{equation}
 \phi_0 \in F_{\{1,2\}\emptyset}\,,\quad \phi_1 \in F_{\{1\}\{1\}}\,,\quad \phi_2 \in F_{\{2\}\{1\}}\;.
\end{equation}
According to Theorem \ref{dualmk}, the rI-projection of $k$ on $\e_1$ is the unique point $k'$ of $\e_1$ which has
all the prescribed marginals:
\begin{equation}
 k'(x_1;y) = k(x_1;y)\,,\quad k'(x_2;y) = k(x_2;y)\;,
\end{equation}
and can therefore be computed by iterative scaling, either of the joint distribution (as it is traditionally
done, see \cite{csiszar}), or of the channels (our algorithm). 

Here we present a comparison of the two algorithms, implemented similarly and in the same language (Mathematica).
The red dots represent our (channel) algorithm, and the blue dots represent the joint rescaling algorithm.

For the easiest channels (see Figure \ref{fig:xor}), both algorithm converge instantly.

\begin{figure}[H]
 \centering
 \includegraphics[scale=.5,keepaspectratio=true]{./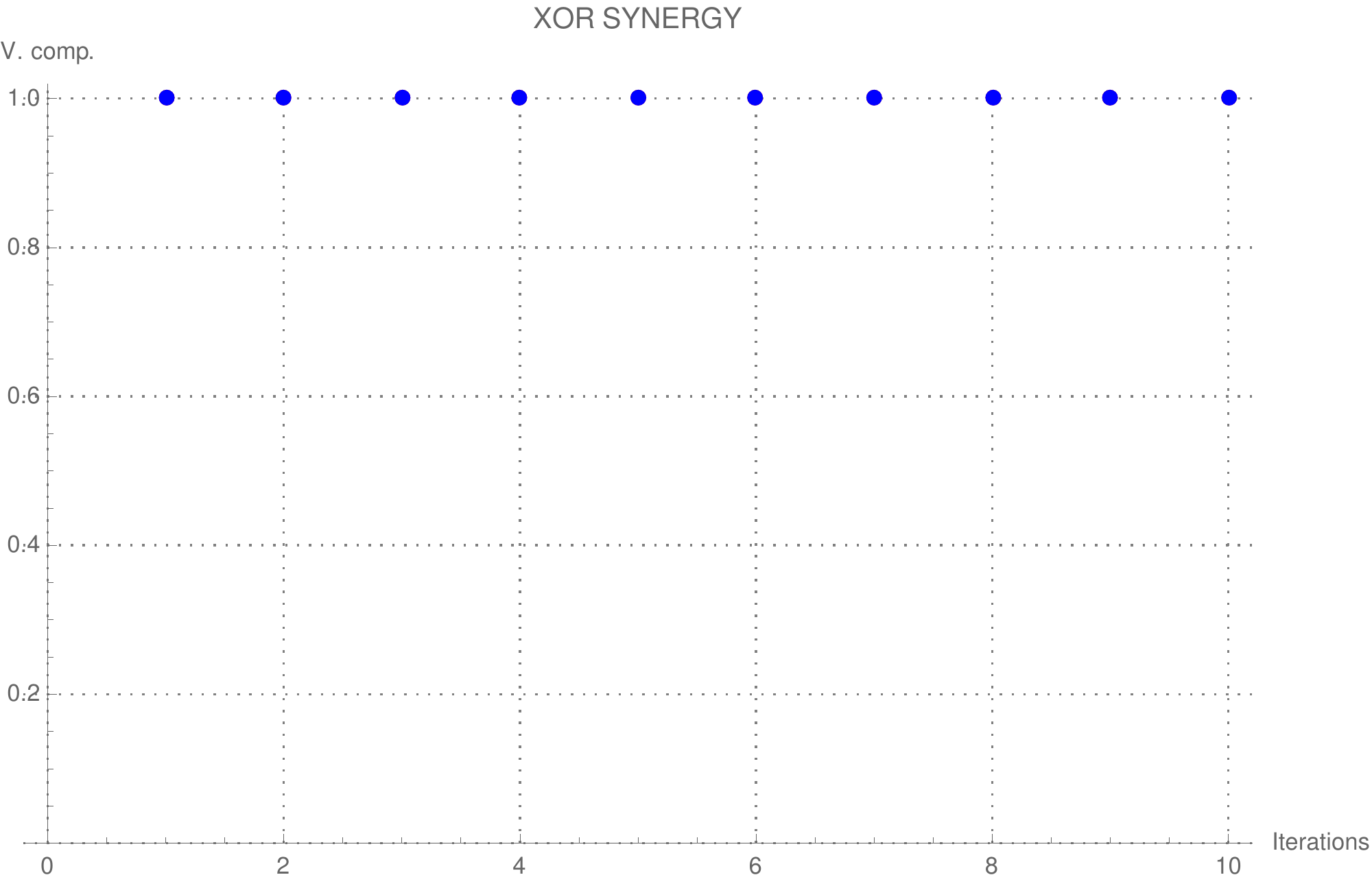}
 \caption{Comparison of convergence times for the synergy of the XOR gate. Both algorithms get immediatly the desired result.
 (The dots here are overlapping, the red ones are not visible.)}
 \label{fig:xor}
\end{figure}

A more interesting example is a randomly generated channel (Figure \ref{fig:rand}), in which both method need 5-10
iterations to get to the desired value. However, the channel method is slightly faster.

\begin{figure}[H]
 \centering
 \includegraphics[scale=.5,keepaspectratio=true]{./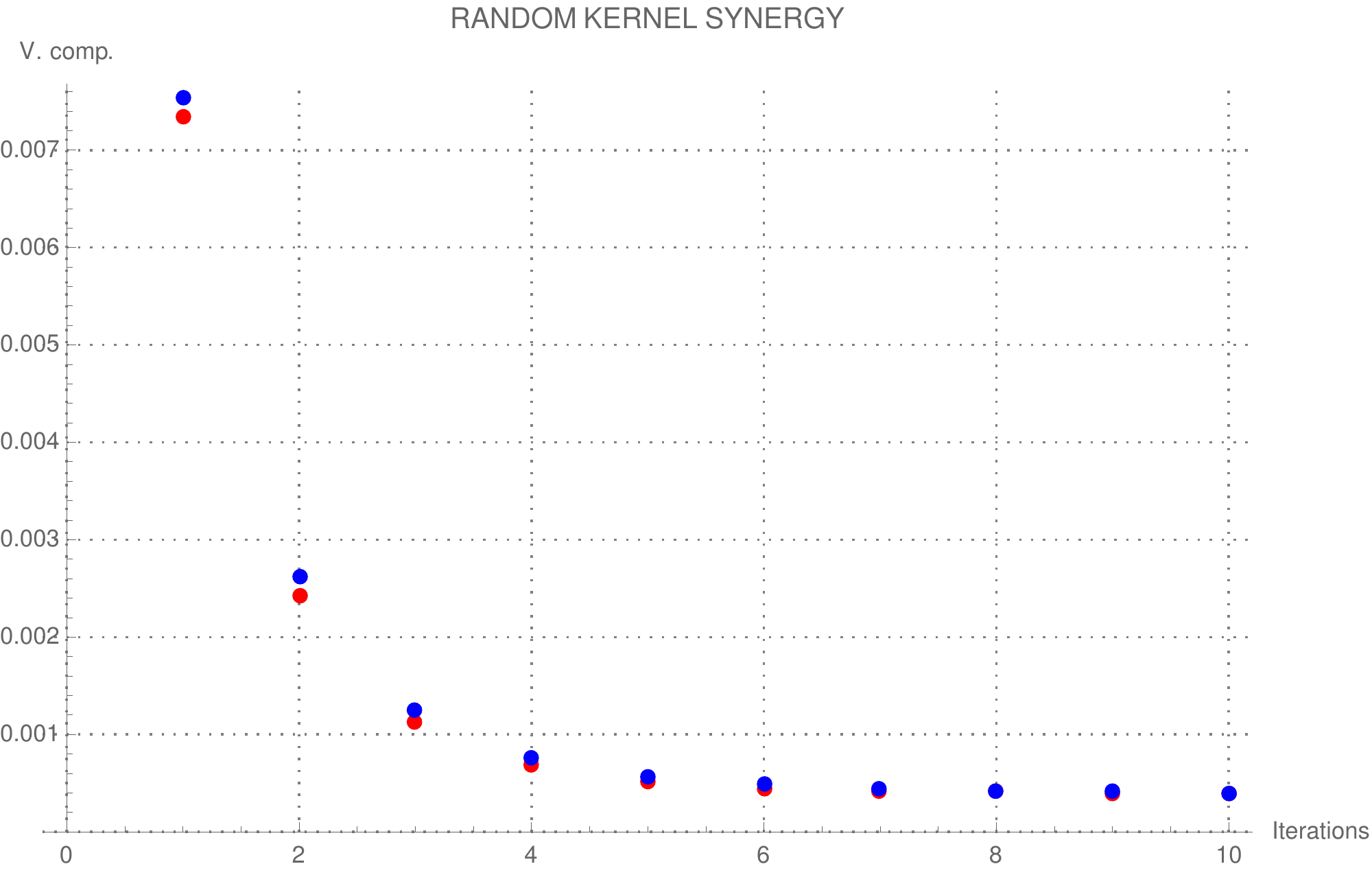}
 \caption{Comparison of convergence times for the synergy of a randomly generated channel. The channel method (red) is slightly faster.}
 \label{fig:rand}
\end{figure}

The most interesting example is the synergy of the AND gate, which should be zero according to the procedure \cite{us}. 
In that article, we mistakenly wrote a different value, that here we would like to correct (it is zero).
The convergence to zero is very slow, of the order of $1/n$ (Figure \ref{fig:and}). It is clearly again
slightly faster for the channel method in terms of iterations.

\begin{figure}[H]
 \centering
 \includegraphics[scale=.5,keepaspectratio=true]{./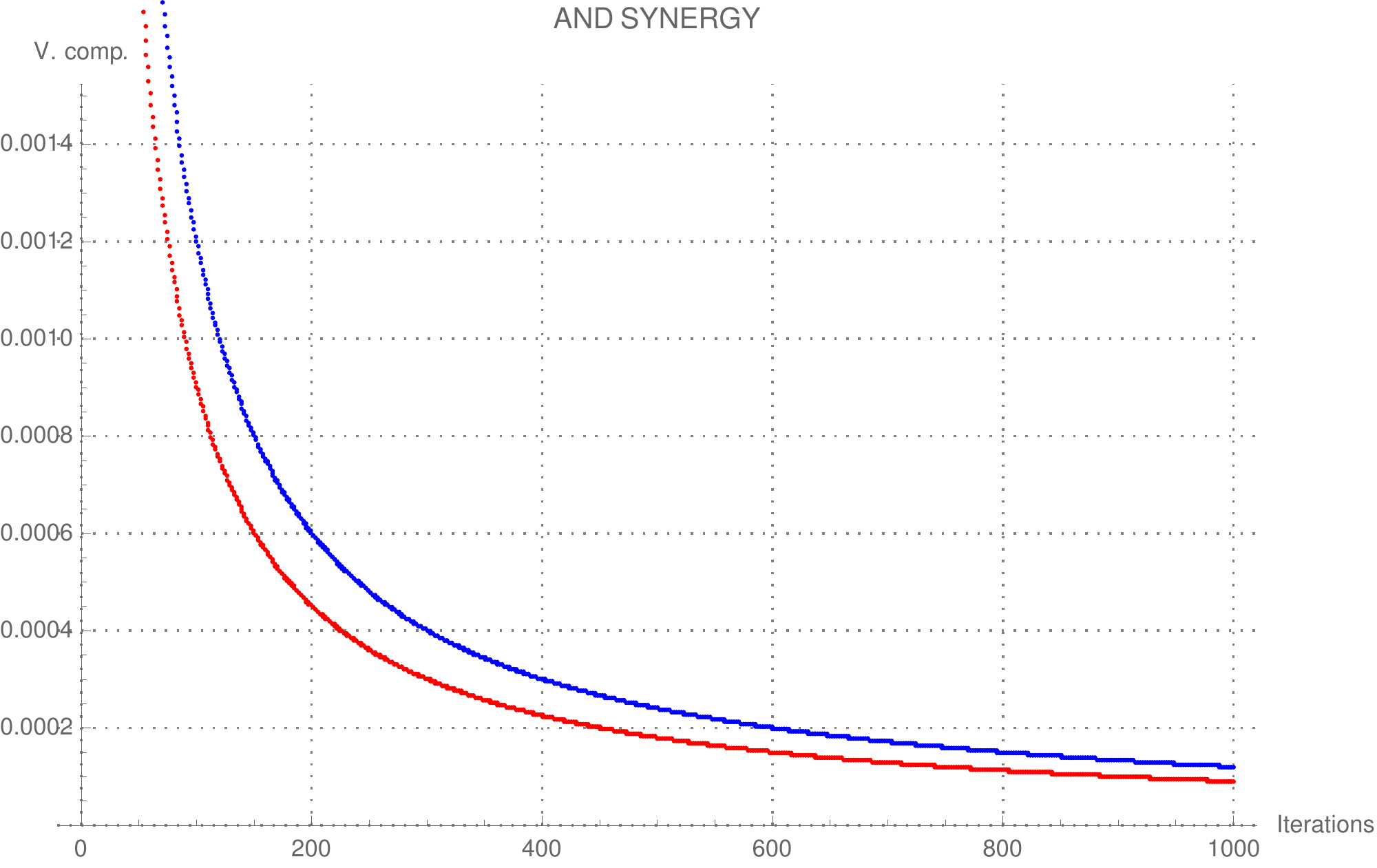}
 \caption{Comparison of convergence times for the synergy of the AND gate. The channel method (red) tends to zero 
 proportionally to $n^{-1.05}$, the joint method (blue) proportionally to $n^{-0.95}$.}
 \label{fig:and}
\end{figure}

It has to be noted, however, that rescaling a channel requires more elementary operations than rescaling
a joint distribution. Because of this, one single iteration with our method takes longer than with the joint
method. (As explained in Section \ref{algo}, a scaling for the channel corresponds to two scalings for the joint.) 
In the end, despite the need of fewer iterations, the total computation time of a projection
with our algorithm can be longer (depending on the particular problem). 
For example, again for the synergy of the AND gate, we can plot the computation time as a function of the 
accuracy (distance to actual value), down to $10^{-3}$. The results are shown in Figure \ref{fig:comp}.

\begin{figure}[H]
 \centering
 \includegraphics[scale=.5,keepaspectratio=true]{./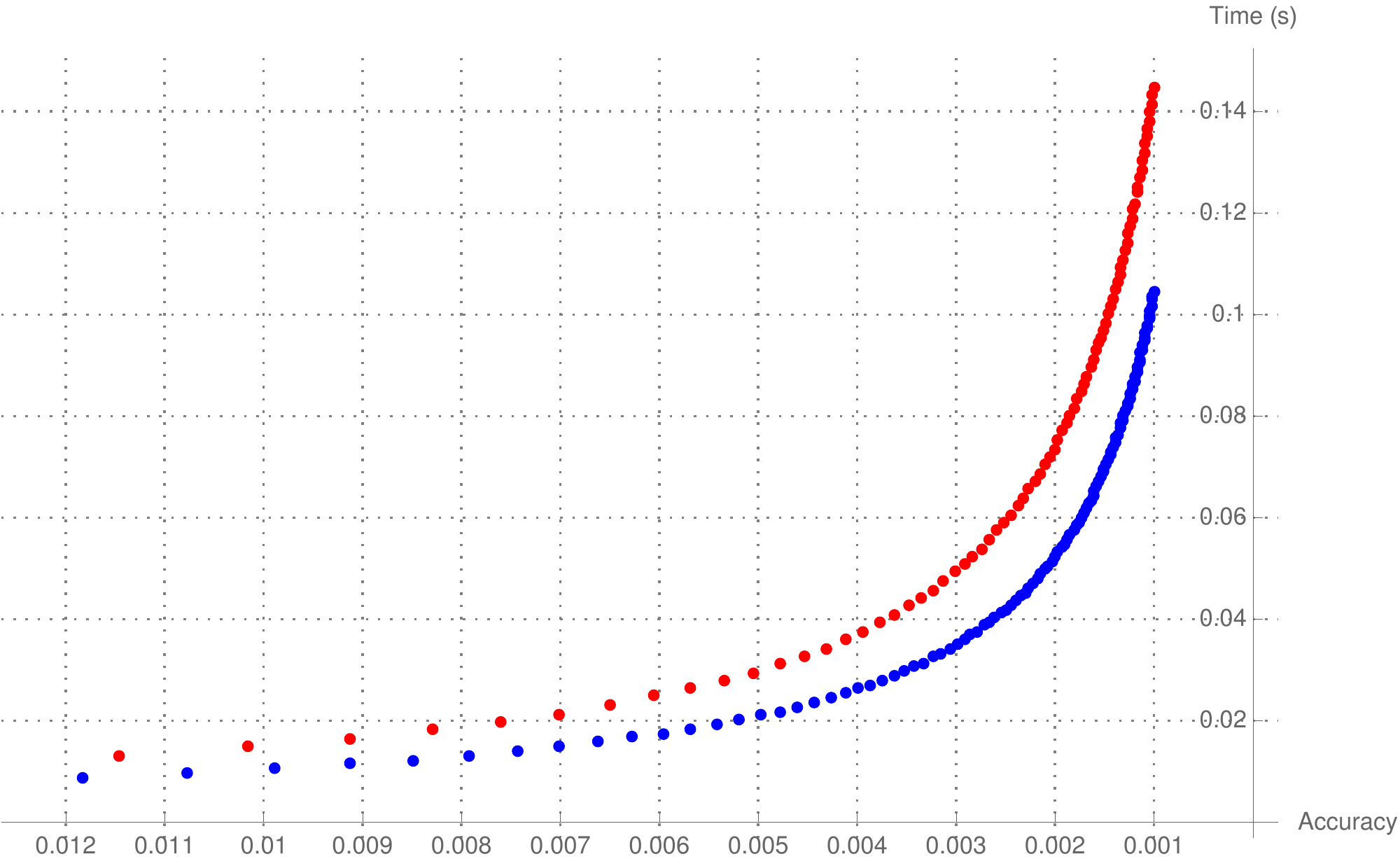}
 \caption{Comparison of total computation times for the synergy of the AND gate. The channel method (red) 
 is slightly slower than the joint method (blue).}
 \label{fig:comp}
\end{figure}

To get to the same accuracy, though, the channel approach used less iterations. 
In summary, our algorithm is better in terms of iteration complexity, 
but generally worse in terms of computing time.

\subsection{Complexity Measures}

Iterative scaling can also be used to compute measures of complexity, as defined in \cite{preprint2}, 
\cite{amaripreprint}, and in Section 6.9 of \cite{amaribook}.
For simplicity, consider two inputs $X_1,X_2$, two outputs $Y_1,Y_2$ and a generic channel between them. 
In general, any sort of interaction is possible, which in terms of graphical models (see \cite{lauritzen}) can be 
represented by diagrams such as those in Figure \ref{fig:graphs1}.

\begin{figure}
\begin{subfigure}{.5\textwidth}
  \centering
  \includegraphics{./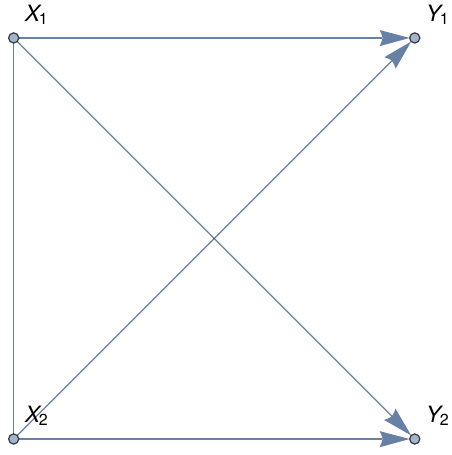}
  \label{fig:graph2}
  \caption{}
\end{subfigure}
\begin{subfigure}{.45\textwidth}
  \centering
  \includegraphics{./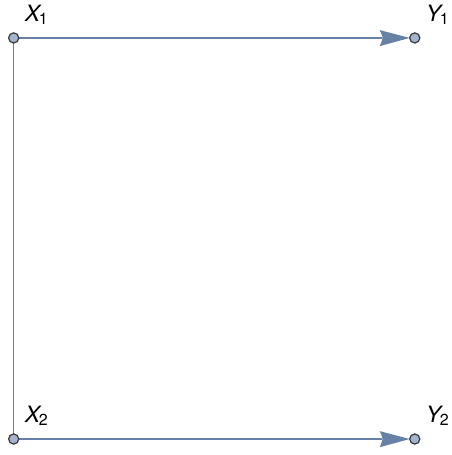}
  \label{fig:gp1}
  \caption{}
\end{subfigure}%

\caption{a) The graphical model corresponding to conditionally independent outputs $Y_1$ and $Y_2$ are 
 indeed correlated, but only indirectly, via the inputs.
 b) The graphical model corresponding to a non-complex system. 
 }
\label{fig:graphs1}
\end{figure}

Any line in the graph indicates an interaction between the nodes. 
In \cite{preprint2} the outputs are assumed to be conditionally independent, i.e. they do not directly interact (or, their
interaction can be \emph{explained away} by conditioning on the inputs). In this case the 
graph looks like Figure \ref{fig:graphs1}a, and the maginals to preserve are those of the family of pairs $(X_{I_i},Y_{J_i})$, $i=1,2$ with:
$X_{I_1}=X_{\{1,2\}}$, $Y_{J_1}=Y_{\{1\}}$, $X_{I_2}=X_{\{1,2\}}$, $Y_{J_2}=Y_{\{2\}}$.

Suppose now that $Y_1,Y_2$ correspond to $X_1,X_2$ at a later time. In this case it is natural to assume that the system
is not complex if $Y_1$ does not depend (directly) on $X_2$, and $Y_2$ does not depend (directly) on $X_1$.
Intuitively, in this case ``the whole is exactly the sum of its parts''.
In terms of graphical models, this means that our system is represented by Figure \ref{fig:graphs1}b,
meaning that the subsets of nodes in question are now only the ones given by $X_{I_1}=X_{\{1\}}$, $Y_{J_1}=Y_{\{1\}}$, $X_{I_2}=X_{\{2\}}$, $Y_{J_2}=Y_{\{2\}}$. These channels (or joints)
form an exponential family (see \cite{preprint2}) which we call $\f_1$.

\begin{figure}
\begin{subfigure}{.5\textwidth}
  \centering
  \includegraphics{./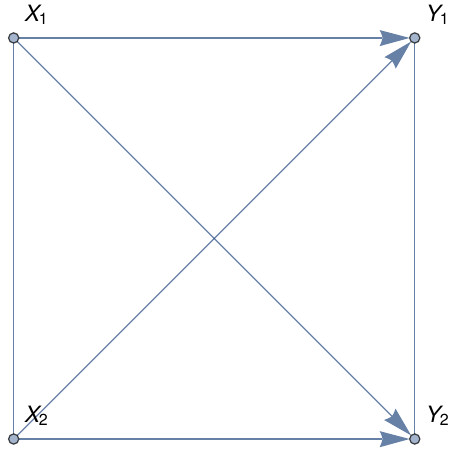}
  \label{fig:graph32}
  \caption{}
\end{subfigure}%
\begin{subfigure}{.45\textwidth}
  \centering
  \includegraphics{./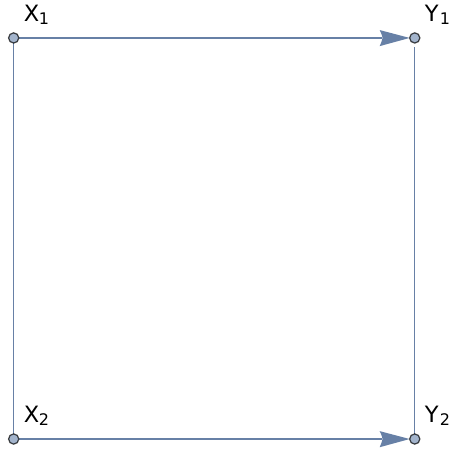}
  \label{fig:gp2}
  \caption{}
\end{subfigure}
\caption{a) The graphical model of Figure \ref{fig:graphs1}a, with correlation between the outputs.
b) The non-complex model of Figure \ref{fig:graphs1}b, with correlation between the outputs. }
\label{fig:graphs3}
\end{figure}

Suppose now, though, that the outputs are not conditionally independent anymore, because of some ``noise''
(see \cite{amaripreprint} and \cite{amaribook}).
This way the interaction structure would look like Figure \ref{fig:graphs3}, i.e. the ``complete'' subset given by $(X_I,Y_J)$ with $X_I=X_{\{1,2\}}$ and $Y_J=Y_{\{1,2\}}$.
In particular, a non-complex but ``noisy'' system would be represented by Figure \ref{fig:graphs3}b, and have 
subsets of nodes given by the pairs $(X_{I_i},Y_{J_i})$, $i=1,2,3$ with:
$X_{I_1}=X_{\{1\}}$, $Y_{J_1}=Y_{\{1\}}$, $X_{I_2}=X_{\{2\}}$, $Y_{J_2}=Y_{\{2\}}$, $X_{I_3}=X_\emptyset$, $Y_{J_3}=Y_{\{1,2\}}$.
Such channels form again an exponential family, which we call $\f_2$.

We would like now to have a measure of complexity for a channel (or joint). 
In \cite{preprint2}, the measure of complexity is defined as the divergence from the family $\f_1$ represented in Figure 
\ref{fig:graphs1}b. We will call such a measure $c_1$. In case of noise, however, it is argued in \cite{amaripreprint} and \cite{amaribook} 
that the divergence should be computed from the family $\f_2$ represented in \ref{fig:graphs3}b (for example, as
written in the cited papers, because such a complexity measure should be required to be upper bounded by
the mutual information between $X$ and $Y$).
We will call such a measure $c_2$.

Both divergences can be computed with our algorithm. As an example, we have considered the following channel:
\begin{equation}
 k(x_1,x_2,x_3;y_1,y_2) = \dfrac{1}{Z(x)}\,\exp\bigg(\big( \alpha\,x_1\,x_2 + \beta x_3\big) (y_1-y_2) \bigg)\;,
\end{equation}
with: 
\begin{equation}
 Z(x) = \sum_{y'_1,y'_2} \exp\bigg(\big( \alpha\,x_1\,x_2 + \beta x_3\big) (y'_1-y'_2) \bigg)\;.
\end{equation}
Here $X_3$ represents a node of ``unknown input noise'' that adds correlation between the outputs (of unknown form)
when if it is not observed. 
We have chosen $\alpha=1$ and $\beta=2$, and a uniform input probability $p$. 
After marginalizing out $X_3$ (obtaining then an element of the type of Figure \ref{fig:graphs3}a), we can compute the two divergences:
\begin{itemize}
 \item $c_1(k)=D_p(k||\f_1)=0.519$. 
 \item $c_2(k)=D_p(k||\f_2)=0.110$. 
\end{itemize}
This could indicate that $c_1$ is incorporating part the correlation of the output nodes due to the ``noise'', and therefore probably overestimating 
the complexity, at least in this case. 

One could nevertheless also argue that $c_2$ can underestimate complexity, as we can see in the 
following ``control'' example. Consider the channel:
\begin{equation}
 h(x_1,x_2;y_1,y_2) = \dfrac{1}{Z(x)}\,\exp\bigg(\big( \alpha\,x_1\,x_2 \big) (y_1-y_2) \bigg)\;,
\end{equation}
with: 
\begin{equation}
 Z(x) = \sum_{y'_1,y'_2} \exp\bigg(\big( \alpha\,x_1\,x_2 \big) (y'_1-y'_2) \bigg)\;,
\end{equation}
which is represented by the graph in Figure \ref{fig:graphs1}a. If the difference between $c_1$
and $c_2$ were just due to the noise, then for our new channel $c_1(h)$ and $c_2(h)$ should be equal. This is
not the case:
\begin{itemize}
 \item $c_1(h)=D_p(h||\f_1)=0.946$. 
 \item $c_1(h)=D_p(h||\f_2)=0.687$. 
\end{itemize}
The divergences are still different. This means that there is an element $h_2$ in $\f_2$, 
which does \emph{not} lie in $\f_1$, for which:
\begin{equation}
 D_p(h||h_2) < D_p(h||h_1)\quad \forall h_1\in f_1\;.
\end{equation}

The difference is this time smaller, which could mean that noise still does play 
a role, but in rigor it is hard to say, since none of these quantities is linear, and divergences 
do not satisfy a triangular inequality.

We do not want to argue here in favor or against any of these measures. We would rather like to point out that 
such considerations can be done mostly after explicit computations, which can be done with iterative scaling.

\cleardoublepage

\cleardoublepage
\appendix
\section{Proofs}

\begin{proof}[Proof of Theorem \ref{dualmk}]
 $1\Leftrightarrow2$: 
 Choose a basis $f_1,\dots,f_d$ of $\lin$. Define the map $\theta\mapsto k_\theta$, with:
 \begin{equation}
  k_\theta(x;y)=k(\theta_1,\dots,\theta_d)(x;y) := \dfrac{1}{Z_\theta(x)}\, k_0(x;y)\,\exp\left( \sum_{j=1}^d \theta_j f_j(x,y)\right)\;,
 \end{equation}
 and:
 \begin{equation}
  Z_\theta(x) := \sum_{y} k_0(x;y)\,\exp\left( \sum_{i=1}^d \theta_i f_i(x,y)\right)\;.
 \end{equation}
 Then:
 \begin{equation}
  D_p(k||k_\theta) = D_p(k||k_0) -\sum_{j=1}^d \theta_j\,\E_{pk}[f_j] + \E_p[\log Z_\theta]\;.
 \end{equation}
 Deriving (where $\D_j$ is w.r.t. $\theta_j$):
 \begin{equation}\label{pder}
  \D_jD_p(k||k_\theta) = - \E_{pk}[f_j] + \E_p\left[ \dfrac{\D_j Z_\theta}{Z_\theta} \right]\;.
 \end{equation}
 The term in the last brackets is equal to:
 \begin{align}
  \dfrac{\D_j Z_\theta}{Z_\theta} &= \dfrac{1}{Z_\theta}\,\sum_{y} k_0(x;y)\,\exp\left( \sum_{i=1}^d \theta_i f_i(x,y)\right) f_j(x,y)\\
   &= \sum_y k_\theta(x;y) f_j(x,y) \;,
 \end{align}
 so that \eqref{pder} now reads:
 \begin{equation}\label{extremum}
  \D_jD_p(k||k_\theta) = - \E_{pk}[f_j] + \E_{pk_\theta}[f_j]\;. 
 \end{equation}
 This quantity is equal to zero for every $j$ if and only if $k_\theta\in\m$.
 Now if $k_\theta$ is a minimizer, it satisfies \eqref{extremum}, and so $k_\theta\in\m$.
 Viceversa, suppose $k_\theta\in\m$, so that it satisfies \eqref{extremum} for every $j$. To prove that it is a global 
 minimizer, we look at the Hessian:
 \begin{equation}
  \D_i\D_jD_p(k||k_\theta) = \D_i\D_jD(pk||pk_\theta)\;.
 \end{equation}
 This is precisely the covariance matrix of the joint probability measure $pk_\theta$, which is positive definite.
 
 $1\Leftrightarrow3$: 
 For every $m\in\m$, we have:
 \begin{align}\label{mk0}
  D_p(m||k_0) &= \sum_{x,y}p(x)\,m(x;y)\log \dfrac{m(x;y)}{k_0(x;y)} = \E_{pm}\left[ \log \dfrac{m}{k_0} \right]\;.
 \end{align}
 If $k'\in\e$, then:
 \begin{equation}\label{mk1}
  D_p(m||k_0) = \E_{pm}\left[ \log \dfrac{m}{k'}+\log\dfrac{k'}{k_0} \right] = D_p(m||k') + \E_{pm}\left[ \log\dfrac{k'}{k_0} \right]\;.
 \end{equation}
 By definition of $\e$, the logarithm in the last brackets belongs to $\lin$, and since $m\in\m$:
 \begin{equation}
  \E_{pm}\left[ \log\dfrac{k'}{k_0} \right] = \E_{pk}\left[ \log\dfrac{k'}{k_0} \right] = \E_{pk'}\left[ \log\dfrac{k'}{k_0} \right]\;.
 \end{equation}
 Inserting in \eqref{mk1}:
 \begin{equation}\label{mk2}
  D_p(m||k_0) =  D_p(m||k') + \E_{pk'}\left[ \log\dfrac{k'}{k_0} \right] = D_p(m||k') + D_p(k'||k_0)\;.
 \end{equation}
 Since $D_p(m||k')\geq0$, \eqref{mk2} shows that $k'$ is a minimizer.
 Since $D_p(m||k_0)=D(pm||pk_0)$ is strictly convex in the first argument, its minimizer is unique.
\end{proof}

\begin{proof}[Proof of Proposition \ref{jeq}]
 \begin{align}
  p(x_I)k(x_I;y_J) &=  p(x_I)\sum_{x_{I^c},y_{J^c}}p(x_{I^c}|x_I)\,k(x_I,x_{I^c};y_J,y_{J^c}) \\
   &= \sum_{x_{I^c},y_{J^c}}p(x_I)\,p(x_{I^c}|x_I)\,k(x_I,x_{I^c};y_J,y_{J^c}) \\
   &= \sum_{x_{I^c},y_{J^c}}p(x_I,x_{I^c})\,k(x_I,x_{I^c};y_J,y_{J^c}) \\
   &= \sum_{x_{I^c},y_{J^c}}pk(x_I,x_{I^c},y_J,y_{J^c}) \\
   &= pk(x_I,y_J)\;.
 \end{align}
\end{proof}

\begin{proof}[Proof of Proposition \ref{presc}]
 For $f$ in $F_{IJ}$:
 \begin{equation}\label{mtwo1}
   \E_{pk}[f] = \sum_{x,y} p(x)\,k(x;y)\,f(x,y) = \sum_{x_I,y_J} p(x_I)\,k(x_I;y_J)\,f(x_I;y_J)\;,
 \end{equation}
 and just as well:
 \begin{equation}\label{mtwo2}
   \E_{p\bar{k}}[f] = \sum_{x,y} p(x)\,\bar{k}(x;y)\,f(x,y) = \sum_{x_I,y_J} p(x_I)\,\bar{k}(x_I;y_J)\,f(x_I;y_J)\;.
 \end{equation}
 The definition in \eqref{m} (with strict positivity of $p$) requires exactly that:
 \begin{equation}
  \E_{pk}[f]=\E_{p\bar{k}}[f]
 \end{equation}
 for every $f\in F_{IJ}$. Using \eqref{mtwo1} and \eqref{mtwo2}, the equality becomes:
 \begin{equation}
  \sum_{x_I,y_J} p(x_I)\,k(x_I;y_J)\,f(x_I;y_J) = \sum_{x_I,y_J} p(x_I)\,\bar{k}(x_I;y_J)\,f(x_I;y_J)
 \end{equation}
 for every $f$ in $F_{IJ}$, which means that $k(x_I;y_J)=\bar{k}(x_I;y_J)$.
\end{proof}

\begin{proof}[Proof of Proposition \ref{jlevel}]
 We have:
 \begin{align}
  p \s_{IJ}^{\bar k} k\,(x_I,y_J) &= p \s_{IJ}^{\bar k} k\,(x_I,x_{I^c},y_J,y_{J^c})\\
   &= p(x_I,x_{I^c}) k(x_I , x_{I^c} ; y_J , y_{J^c} ) \, \frac{ \bar{k}(x_I ; y_J) }{ k(x_I ; y_J) } \\
   &= pk(x_I , x_{I^c} , y_J , y_{J^c} ) \, \frac{p(x_I) \, \bar{k}(x_I ; y_J) }{p(x_I) \, k(x_I ; y_J) } \\
   &= pk(x_I , x_{I^c} , y_J , y_{J^c} ) \, \frac{p\bar{k}(x_I , y_J) }{pk(x_I , y_J) } \\
   &= \sigma_{IJ}^{p\bar k} pk\,(x,y)\;.
 \end{align}
\end{proof}

\begin{proof}[Proof of Proposition \ref{inscale}]
 The first member is equal to:
 \begin{align}
  p(x) N \s_{IJ}^{\bar k} k(x,y) &= p(x)\, \dfrac{\s_{IJ} k \, (x,y)}{\sum_{y'}\s_{IJ}^{\bar k} k \, (x,y')} \\
   &= p(x)\, \dfrac{\s_{IJ}^{p \bar k} pk (x)\,  \s_{IJ} k \, (x,y)}{\sum_{y'}\s_{IJ}^{p \bar k} pk (x)\,\s_{IJ}^{\bar k} k \, (x,y')} \\
   &= p(x)\, \dfrac{\s_{IJ}^{p \bar k} pk\,(x,y)}{\sum_{y'}\s_{IJ}^{p \bar k} pk \, (x,y')} \\
   &= \s_{IJ}^{p \bar k} pk\,(x,y)\,\dfrac{p(x)}{\s_{IJ}^{p \bar k} pk \, (x)} \\
   &=  \s_{[N]}^{p} \s_{IJ}^{p \bar k} pk\;.
 \end{align}
\end{proof}

\begin{proof}[Proof of Theorem \ref{convergence}]
 In the hypothesis and notation of Theorem \ref{jointit}, take the collection $\lin_1,\dots,\lin_{2n}$ in the following way,
 for $i=1,\dots,n$:
 \begin{itemize}
  \item $\lin_{2i-1}:=\mathcal{J}_{[N]}(p)$;
  \item $\lin_{2i}:=\mathcal{J}_{I_iJ_i} (p \bar k)$.
 \end{itemize}
 Their intersection $\lin$ is nonempty, as it contains at least $p\bar k$.
 
 Take as initial distribution $q^0:=pk^0$ and form as in Theorem \ref{jointit} the sequence $\{q^j\}$ of $I$-projections. 
 According to Theorem \ref{jointit}, this sequence converges to the $I$-projection of $pk^0$ on $\lin$.
 Since $\lin \subseteq \mathcal{J}_{[N]}(p)$, this projection will have input marginal equal to $p(X)$, and so we can write 
 it as $p(X)\,l(X;Y)$ for some uniquely defined channel $l$. We have, for $j\to\infty$:
 \begin{equation}
  q^j \to pl\;,
 \end{equation}
 so in particular, for the subsequence of even-numbered terms also:
 \begin{equation}
  q^{2j} \to pl\;.
 \end{equation}
 This subsequence is defined iteratively by:
 \begin{equation}
  q^{2j} = \s_{[N]}^{p} \s_{I_jJ_J}^{p \bar k} q^{2(j-1)}\;.
 \end{equation}
 Propositions \ref{jlevel} and \ref{inscale} imply then that:
 \begin{equation}
  q^{2j} = p\,k^j
 \end{equation}
 for every $j$, where $\{k^j\}$ is the sequence defined in the statement of Theorem \ref{convergence}.
 Therefore this sequence converges:
 \begin{equation}
  k^j \to l\;.
 \end{equation}
 
 Since $pl\in\lin\subseteq \lin_i$ for all $i$, $l\in \m_i$ for all $i$ because of \eqref{corresp}, which by 
 definition means that $l\in \m$.
 Moreover, $pl$ is the $I$-projection $q$ of $pk^0$ on $\lin$, which means that:
  \begin{equation}
  D(pk^0||p\bar k) =  D(pk^0||pl) + D(pl||p\bar k)\;.
 \end{equation}
 Using the chain rule of the KL-divergence \eqref{pkdiv}, we get:
 \begin{equation}
  D_p(k^0||\bar k) = D_p(k^0||l) + D_p(l||\bar k)\;,
 \end{equation}
 which means that $l$ is the $I$-projection of $k^0$ on $\m$.
\end{proof}


\begin{thebibliography}{5}
\addcontentsline{toc}{section}{References}

\bibitem{csiszar}
Csisz\'ar, I. and Shields, P. C.
\newblock{\em Information Theory and Statistics: A Tutorial}.
\newblock{Foundations and Trends in Communications and Information Theory}, 1(4):417--528, 2004

\bibitem{goodman}
Goodman, J.
\newblock{\em Sequential Conditional Generalized Iterative Scaling}.
\newblock{Proceedings of the 40th Annual Meeting of ACL}:9--16, 2002.

\bibitem{olbrich}
Olbrich, E., Bertschinger, N., and Rauh, J.
\newblock {\em Information decomposition and synergy}.
\newblock {Entropy}, 17(5):3501--3517, 2015

\bibitem{us}
Perrone, P. and Ay, N.
\newblock {\em Hierarchical quantification of synergy in channels}.
\newblock {Frontiers in Robotics and AI}, 35(2), 2016.


\bibitem{preprint2}
Ay, N. 
\newblock{\em Information Geometry on Complexity and Stochastic Interaction}.
\newblock{Entropy}, 17, 2432--2458, 2015.

\bibitem{amaripreprint}
Oizumi, M., Tsuchiya, N., and Amari, S.
\newblock{\em  A unified framework for information integration based on information geometry}
\newblock {Preprint available on arXiv:1510.04455}, 2015.

\bibitem{amaribook}
Amari, S.
\newblock{\em Information Geometry and Its Applications}
\newblock Springer, 2016.

\bibitem{kakihara}
Kakihara, Y.
\newblock {\em Abstract Methods in Information Theory}.
\newblock World Scientific, 1999.

\bibitem{iproj}
Csisz\'ar, I.
\newblock{\em I-divergence geometry of probability distributions and minimization problems}.
\newblock{Annals of Probability}, 3:146--158, 1975.

\bibitem{iriv}
Csisz\'ar, I. and Mat\'u\v{s}, F.
\newblock{\em Information projections revisited}.
\newblock{IEEE Transactions on Information Theory}, 49:1474--1490, 2003.

\bibitem{hierarchy}
Amari, S. 
\newblock {\em Information geometry on a hierarchy of probability distributions}.
\newblock { IEEE Transactions on Information Theory}, 47(5):1701--1709, 2001.

\bibitem{lauritzen}
Lauritzen, S. L.
\newblock {\em Graphical Models}.
\newblock Oxford, 1996.

\bibitem{beer}
Williams, P. L. and Beer, R. D.
\newblock {\em Nonnegative decomposition of multivariate information}.
\newblock {Submitted. Preprint available on arXiv:1004.2151}, 2010.

\bibitem{families}
Amari, S. and Nagaoka, H.
\newblock {\em Differential geometry of smooth families of probability distributions}.
\newblock {Technical Report METR 82-7, Univ. of Tokyo}, 1982.

\bibitem{amari}
Amari, S. and Nagaoka, H.
\newblock {\em Methods of Information Geometry}.
\newblock Oxford, 1993.

\end{thebibliography}
\end{document}